\definecolor{listinggray}{gray}{0.9}
\newcommand{\ignore}[1]{{}}
\newcommand{\eg}{\textit{e.g.,}\xspace}
\newcommand{\ie}{\textit{i.e.,}\xspace}
\newtheorem{lemma}{Lemma}
\newtheorem{corollary}{Corollary}
\newtheorem{fact}{Fact}
\title{\bf POSH: Paris OpenSHMEM\\
A High-Performance OpenSHMEM Implementation for Shared Memory Systems}
\author{Camille Coti
\thanks{Some experiments presented in this paper were carried out using the Grid'5000
experimental testbed, being developed under the INRIA ALADDIN development action
with support from CNRS, RENATER and several Universities as well as other
funding bodies (see \protect\url{https://www.grid5000.fr}).}\\
\normalsize LIPN, CNRS-UMR7030,\\
\normalsize Universit\'{e} Paris 13, F-93430 Villetaneuse, France\\
\texttt{camille.coti@lipn.univ-paris13.fr}
}
\date{}
\begin{document}

\maketitle

\begin{abstract}
In this paper we present the design and implementation of POSH, an 
Open-Source implementation of the OpenSHMEM standard. We present a 
model for its communications, and prove some properties on the memory 
model defined in the OpenSHMEM specification. We present some 
performance measurements of the communication library featured by POSH 
and compare them with an existing one-sided communication library. 
POSH can be downloaded from \url{http://www.lipn.fr/~coti/POSH}. 
\end{abstract}

\section{Introduction}
\label{sec:intro}

The drive toward many-core architectures has been tremendous during the last
decade. Along with this trend, the community has been searching, investigating
and looking for programming models that provide both control on the data
locality and flexibility of the data handling. 

SHMEM was introduced by Cray~\cite{CraySHMEM} in 1994, followed shortly later by
SGI~\cite{SGISHMEM}. In an effort to provide a homogeneous, portable standard for
the language, the OpenSHMEM consortium released a specification for the
application programming interface~\cite{openshmem}. The final version of
OpenSHMEM 1.0 was released in January 2012. 

The OpenSHMEM standard is a programming paradigm for parallel applications that
uses single-sided communications. It opens gates for exciting research in
distributed computing on this particular communication model. 

This paper presents Paris OpenSHMEM (POSH), which is a portable, open-source
implementation of OpenSHMEM. It uses a high-performance communication engine on
shared memory based on the Boost library~\cite{BoostIPC}, and benefits from the
template engine provided by current C++ compilers.

This paper describes the implementation choices and the algorithms
that have been used in POSH in order to fit with the memory model and the
communication model while obtaining good performance and maintaining
portability. 

This report is organized as follows; section \ref{sec:related} gives a short
overview of the related literature about parallel programming paradigms and
distributed algorithms on shared memory and one-sided communication
models. Section \ref{sec:model} gives details about the memory model and the
communication model which are considered here. Section \ref{sec:implem} describes
the implementation choices that were made in POSH. Section \ref{sec:perf}
presents some performance results that were obtained by the current version of
POSH. Last, section \ref{sec:conclu} concludes the report and states some open
issues and future works that will be conducted on POSH. 
\section{Related works}
\label{sec:related}

Traditionally, distributed systems are divided into two categories of models for
their communications: those that communicate by sending and receiving messages
(\ie message-passing systems) and those that communicate using registers of
shared memory where messages are written and read from (\ie shared memory
systems) \cite{Tel94, D00}. 

Along with the massive adoption of the many-core hardware architecture,
researchers and engineers have tried to find the most efficient programming
paradigm for such systems. The idea is to take advantage of the fact that
processing units (processes or threads) have access to a common memory: those
are {\it shared memory} systems. Unix IPC V5 and posix threads are the most
basic programming tools for that. OpenMP \cite{openmp} provides an easy-to-use
programming interface and lets the programmer write programs that look very
similar to sequential ones, and the parallelization is made by the
compiler. Therefore, the compiler is in charge with data decomposition and
accesses. Several data locality policies have been implemented to try to make
the best possible guess about where it must be put to be as efficient as
possible \cite{weng2002implementing}. OpenMP performs well on regular patterns,
where the data locality can be guessed quite accurately by the compiler. Cilk
\cite{blumofe1995cilk} and TBB \cite{reinders2010intel} can also be cited as
programming techniques for shared-memory systems.

MPI \cite{Forum94, MPI-2} has imposed itself as the {\it de facto} programming
standard for distributed-memory parallel systems. It is highly portable, and
implementations are available for a broad range of platforms. MPICH
\cite{mpich2} and Open~MPI \cite{openmpi} must be cited among the most widely
used open-source implementations. It can be used on top of most local-area
communication networks, and of course most MPI implementations provide an
implementation on top of shared memory. MPI is often referred to as ``the assembly 
language of parallel computing": the programmer has total control of the data
locality, however all the data management must be implemented by hand by the
programmer. Moreover, it is highly synchronous: even though specific one-sided
communications have been introducted in the MPI2 standard \cite{MPI-2}, the
sender and the receiver must be in matching communication routines for a
communication to be performed. 

Hence, there exists two opposing trends in parallel programming techniques:
programming easiness versus data locality mastering. A third direction exists
and is becoming pertinent with many-core architectures: making data 
locality mastering easier and more flexible for the programmer. UPC can be cited
as an example of programming technique that is part of that third category
\cite{UPC}. It provides compiler-assisted automatic loops, automatic data
repartition in (potentially distributed) shared memory, and a set of one-sided
communications. 

One-sided communications are natural on shared memory systems, and more flexible
than two-sided communications in a sense that they do not require that both of
the processes involved in the communication (origin and destination of the data)
must be in matching communication routines. However, they require a careful
programming technique to maintain the consistency of the shared memory and avoid
race conditions \cite{BC12}.

SHMEM was introduced by Cray \cite{CraySHMEM} as part of its programming
toolsuite with the Cray T3 series, and SGI created its own dialecte of
SHMEM \cite{SGISHMEM}. 

Some implementations also exist for high-performance RDMA networks: Portals has
been working on a specific support for OpenSHMEM by their communication
library~\cite{PortalsOpenSHMEM11}. Some other implementations are built on
top of MPI implementations over RDMA networks, such as~\cite{Brightwell04anew}
for Quadrics networks or~\cite{Liu03highperformance} over InfiniBand networks.

In this paper, we propose to use a shared memory communication engine based on
the Boost.Interprocess library~\cite{BoostIPC}, which is itself using the POSIX
{\tt shm} API. 

\ignore{ implemsn Ron Brightwell sur MPI
UCCS http://www.csm.ornl.gov/workshops/openshmem2013/documents/DesigningAHighPerformanceOpenSHMEMImplementation.pdf
}
\section{Memory model}
\label{sec:model}

\begin{figure*}[ht]
  \centering
  \begin{tikzpicture}[scale=.45]

    \foreach \x in { 0, 1, 2 } {
      \path[draw,thick] ( \x*5, 0 ) rectangle ( \x*5+2, -4 );
      \path[draw,thick] ( \x*5, 0 ) rectangle ( \x*5+2, -2 );
      \node at ( \x*5+1, -1 ) {\textbf{P\x}};
      \path[draw,thick,fill=gray!40] ( \x*5, -4 ) rectangle ( \x*5+2, -8 );
      \path[draw,thick,dashed] ( \x*5+.2, -5.9 ) rectangle ( \x*5+1.8, -7.8 );
      \path[draw,thick,<-] ( \x*5+1, -5.9 ) -- ( \x*5+1, -5.4 );

      \path[draw,thick,fill=black] ( \x*5+.2, -4.2 ) rectangle ( \x*5+1, -4.4 );
      \path[draw,thick,<-] ( \x*5+.6, -4.4 ) -- ( \x*5+.6, -5 );

      \path[draw,thick,fill=black] ( \x*5+.4, -6.2 ) rectangle ( \x*5+1.6, -6.4 );
      \path[draw,thick,<-] ( \x*5+1.4, -6.4 ) -- ( \x*5+1.4, -7.6 );

      \path[draw,thick,fill=black] ( \x*5+.4, -6.8 ) rectangle ( \x*5+1.2, -7 );
      \path[draw,thick,<-] ( \x*5+.7, -7 ) -- ( \x*5+.7, -7.6 );
    }
    
    \path[draw,thick] ( .6, -5 ) -- ( 13, -5 );
    \path[draw,thick] ( .7, -7.6 ) -- ( 13, -7.6 );
    \path[draw,thick] ( 11, -5.4 ) -- ( -1, -5.4 );

    \path[draw,thick] ( .2, -2.2 ) rectangle ( .5, -2.4 );
    \path[draw,thick] ( .7, -2.2 ) rectangle ( 1.2, -2.4 );
    \path[draw,thick] ( .7, -3 ) rectangle ( 1.2, -3.2 );

    \path[draw,thick] ( 5.2, -2.2 ) rectangle ( 6.2, -2.4 );
    \path[draw,thick] ( 5.4, -2.6 ) rectangle ( 6, -2.8 );

    \path[draw,thick] ( 10.4, -2.2 ) rectangle ( 10.6, -2.4 );
    \path[draw,thick] ( 10.2, -2.6 ) rectangle ( 11, -2.8 );
    \path[draw,thick] ( 11, -3 ) rectangle ( 11.8, -3.2 );

    \node at ( -3, -2.4 ) {Private};
    \node at ( -3, -3.4 ) {memory};

    \node at ( -3, -5 ) {Symmetric};
    \node at ( -3, -6 ) {heap};

    \node at ( 15, -4.5 ) {Static global};
    \node at ( 15, -5.2 ) {objects};

    \node at ( 15, -7 ) {Symmetric};
    \node at ( 15, -7.7 ) {objects};

  \end{tikzpicture}
  \caption{\label{fig:symmetricheap}Memory organization with global static
    objects and data in the symmetric heap. Static objects are remotely
    accessible, dynamic objects are located in the symmetric heap.} 
\end{figure*}
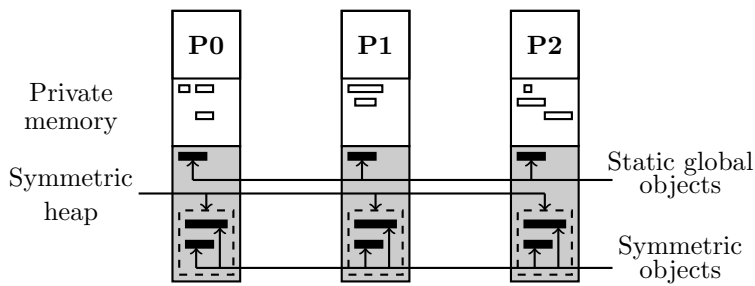

OpenSHMEM considers a memory model where every process of the parallel
application owns a local bank of memory which is split into two parts:
\begin{itemize}[noitemsep]
\item Its \emph{private memory}, which is accessible by itself only; no other
  process can access this area of memory.
\item Its \emph{public memory}, that can be accessed by any process of the
  parallel application, in read/write mode.
\end{itemize}

This memory organization is represented in figure \ref{fig:symmetricheap}. Each
process owns its private memory (white rectangles) as well as an area of public
memory (gray rectangles).

\subsection{Symmetric objects}

The public memory of each process is defined as a \emph{symmetric heap}. This
notion of symmetry is important because it is a necessary condition for some
helpful memory-management properties in OpenSHMEM (see section
\ref{sec:implem:remoteaccess} and \ref{sec:implem:collective:temp}). It means
that for any object which is stored in the symmetric heap of a process, there
exists an object of the same type and size and the same address, in the
symmetric heap of all the other processes of the parallel application. 

\emph{Dynamically-allocated variables}, \ie variables that are allocated
explicitely at run-time, are placed in the symmetric heap. OpenSHMEM provides
some functions that allocate and deallocate memory space dynamically in the
symmetric heap.

Another kind of data is put into processes' public memory: global, static
variables (in C/C++). 

All the data that is placed in the processes' symmetric heap and all the global,
static variables are remotely accessible by all the other processes. These two
kinds of variables are represented in figure \ref{fig:symmetricheap}. The gray
areas prepresent the public memory of each process; composed with global static
variables and the symmetric heap. The white areas represent the private memory
of each process.

\subsection{One-sided communications}

\begin{figure*}[ht]
  \centering
  \begin{tikzpicture}[scale=.5]
    
    \foreach \x in { 0, 1, 2 } {
      \path[draw,thick] ( \x*5, 0 ) rectangle ( \x*5+2, -6 );
      \path[draw,thick] ( \x*5, 0 ) rectangle ( \x*5+2, -2 );
      \node at ( \x*5+1, -1 ) {\textbf{P\x}};
      \path[draw,thick,fill=gray!40] ( \x*5, -6 ) rectangle ( \x*5+2, -10 );
    }
    
    \node at ( 14.5, -3 ) {Private};
    \node at ( 14.5, -4 ) {Address};
    \node at ( 14.5, -5 ) {Space};
    
    \node at ( 14.5, -7 ) {Public};
    \node at ( 14.5, -8 ) {Address};
    \node at ( 14.5, -9 ) {Space};
    
    \path[draw,thick,fill=black] ( 10.8, -3.2 ) rectangle ( 11.2, -2.8 );
    \path[draw,thick,fill=black] ( 6.2, -6.4 ) rectangle ( 5.8, -6.8 );
    \draw[ ->, thick] ( 6, -6.6 ) -- ( 10.8, -3.2 );
    \node at ( 8.5, -3 ) {\small Remote};
    \node at ( 8.5, -3.8 ) {\small get};
    
    \path[draw,thick,fill=black] ( .8, -4.2 ) rectangle ( 1.2, -3.8 );
    \path[draw,thick,fill=black] ( 6.2, -7.4 ) rectangle ( 5.8, -7.8 );
    \draw[ ->, thick] ( 1, -4 ) -- ( 5.8, -7.4 );
    \node at ( 3.5, -4 ) {\small Remote};
    \node at ( 3.5, -4.8 ) {\small put};

    \path[draw,thick,fill=black] ( .8, -9.2 ) rectangle ( 1.2, -8.8 );
    \path[draw,thick,fill=black] ( 6.2, -9.2 ) rectangle ( 5.8, -8.8 );
    \draw[ ->, thick] ( 1, -9 ) -- ( 5.8, -9 );
    \node at ( 3.5, -7.7 ) {\small Remote};
    \node at ( 3.5, -8.5 ) {\small put};
  \end{tikzpicture}
  \caption{\label{fig:putget}Put and get communications between three processes.}
\end{figure*}
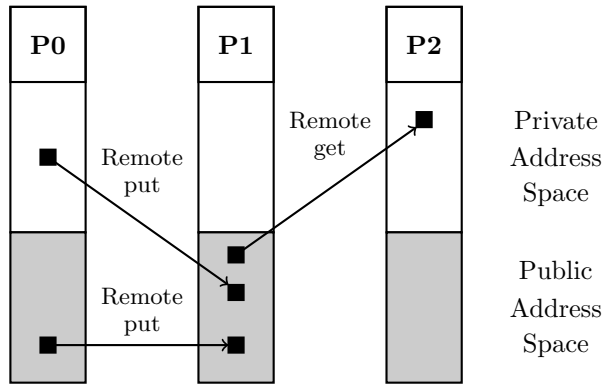

Point-to-point communications in OpenSHMEM are \emph{one-sided}: a process can
reach the memory of another process without the latter knowing it. It is very
convenient at first glance, because no synchronization between the two processes
is necessary like with two-sided communications. However, such programs must be
programmed very carefully in order to maintain memory consistency and avoid
potential bugs such as race conditions. 

Point-to-point communications in OpenSHMEM are based on two primitives:
\texttt{put} and \texttt{get}.
\begin{itemize}[noitemsep]
\item A \texttt{put} operation consists in writing some data at a specific
  address of remote process's public memory. 
\item A \texttt{get} operation consists in reading some data, or fetching it,
  from a specific address of a remote process's public memory. 
\end{itemize}

Data movements are made between the public memory of the local process and the
private memory of the remote process: a process reads a remote value and stores
it in its own private memory, and it writes the value of a variable located in
its own private memory into the public memory of a remote process.  
That memory model and one-sided communications performed on this model have been
described more thoroughly in \cite{BC11, BC12}.

These operations between the public and private memory areas of three processes
of a parallel application are represented in figure \ref{fig:putget}.

\section{Implementation details}
\label{sec:implem}

In this section, we describe the implementation details and the design choices
that have been made in POSH.

\subsection{Shared memory communication engine}
\label{sec:implem:commengine}

POSH relies on Boost's library for inter-process communications
\texttt{Boost.Interprocess}. In particular, it is using the
\texttt{managed\_shared\_memory} class.

Basically, each process's shared heap is an instance of
\texttt{managed\_shared\_memory}. Data is put into that heap by an allocation
method provided by this class followed by a memory copy. Locks and all the
manipluation functions related to the shared memory segment are also provided
by Boost.

\subsubsection{Memory management}
\label{sec:implem:memmanagement}

\paragraph{Allocation and deallocation}

Memory can be allocated in the symmetric heap with the \texttt{shmalloc}
function. Internally, that function calls the \texttt{allocate} function of the
\texttt{managed\_shared\_memory} class on the process's shared heap. That class
also provides an \texttt{allocate\_aligned} routine which is called by the
\texttt{shmemalign} routine. Memory is freed by \texttt{shfree} by using a call
to the function \texttt{deallocate} provided by
\texttt{managed\_shared\_memory}. 

\paragraph{Remote vs local address of the data}

These three SHMEM functions are defined as \emph{symmetric} functions: all the
processes must call them at the same time. They are required by the OpenSHMEM
standard to perform a global synchronization barrier before returning. As a consequence,
if all the memory allocations and deallocations have been made in a symmetric
way, a given chunk of data will have the same local address in the memory of all
the processes. Hence, we can access a chunk of data on process A using the
address it has on process B. That property is extremely useful for remote data
accesses. 

\subsubsection{Access to another process's shared heap}
\label{sec:implem:remoteaccess}

\begin{fact}\label{lemma:memorysymmetry}
If all the processing elements are running on the same architecture, the offset
between the beginning of a symmetric heap and a symmetric object which is
contained by this heap is the same on each processing element. 
\end{fact}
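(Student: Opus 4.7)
The plan is to prove this by induction on the sequence of symmetric memory operations (calls to \texttt{shmalloc}, \texttt{shmemalign}, and \texttt{shfree}). The key observation is that, because these operations are required by the OpenSHMEM standard to be symmetric and to close with a global barrier, the sequence of calls processed by each process's local allocator is identical across all processes.

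First I would fix notation: let $H_i$ denote the base address of the symmetric heap on processing element $P_i$, and for any symmetric object $o$ currently stored in the heap, let $\mathrm{addr}_i(o)$ be its local address on $P_i$. The target claim is that $\mathrm{addr}_i(o) - H_i$ is independent of $i$. For the base case, I would argue that every process initializes its symmetric heap as an instance of \texttt{managed\_shared\_memory} of the same size and with the same internal metadata layout; this is where the same-architecture hypothesis first enters, since it pins down pointer widths, alignment constraints, and the Boost-chosen header layout. Before any user allocation, there are no symmetric objects, so the claim holds vacuously.

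For the inductive step, assume that after $k$ symmetric operations the free/allocated maps of the local heaps coincide up to the translation by $H_i$. The $(k{+}1)$-st operation is either an allocation (\texttt{shmalloc} or \texttt{shmemalign}) with a given size and alignment, or a deallocation (\texttt{shfree}) of a previously allocated symmetric object. Because of the symmetric-call requirement, every process invokes the same routine with the same size/alignment argument; for \texttt{shfree} the pointer passed is itself at the same offset on all processes by the inductive hypothesis. Since the underlying Boost allocator is a deterministic function of the current heap state and the request, and the current heap states agree at the offset level by induction, the new allocation lands at the same offset on every process, or the free list is modified identically. The OpenSHMEM-mandated barrier then guarantees that no subsequent operation can observe an inconsistent intermediate state.

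The main obstacle is justifying the determinism of the Boost allocator: the inductive step only goes through if the allocator's output depends solely on the sequence of requests and the initial heap layout, not on wall-clock time, thread scheduling, address-space randomization inside the mapped segment, or any other environmental nondeterminism. This is precisely why the hypothesis ``same architecture'' is indispensable: it fixes alignment rules, word sizes, and the compiled allocator behaviour so that the $N$ heaps truly evolve along identical offset-level trajectories, even though their absolute base addresses $H_i$ may legitimately differ from one process to another.
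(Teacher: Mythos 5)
Your argument is correct and relies on the same ingredients as the paper's proof --- the OpenSHMEM requirement that \texttt{shmalloc}, \texttt{shmemalign} and \texttt{shfree} be called symmetrically and end with a global barrier, the exclusion of non-symmetric usage as a programmer error with undefined behaviour, and the same-architecture hypothesis --- but you organize it differently. The paper gives a one-paragraph direct appeal: allocations are symmetric and barrier-terminated, hence a variable sits at ``the same local address $X$'' in every heap; it never isolates the assumption that makes this work. You instead decompose the claim as an induction over the sequence of symmetric heap operations, which gives you a precise place to use the barrier (no process can issue the $(k{+}1)$-st operation before all have completed the $k$-th) and forces you to state explicitly the hidden crux: that the Boost allocator is a deterministic function of the current heap state and the request, with the architecture hypothesis pinning down alignment, pointer width and allocator behaviour. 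Your offset-level formulation ($\mathrm{addr}_i(o)-H_i$ independent of $i$, with the base addresses $H_i$ allowed to differ) is also more faithful to the Fact as stated and to the Corollary's translation formula than the paper's own ``same address $X$'' phrasing, which is literally true only if the segments happen to be mapped at identical base addresses. What the paper's terser argument buys is brevity; what yours buys is an explicit identification of the determinism assumption, which you flag honestly rather than prove --- an acceptable stopping point, since the paper does not address it at all. If you wanted to polish further, you could note, as the paper does via its Lemma on collective operations, that temporary non-symmetric allocations inside collectives are reverted before the next symmetric call, so they do not perturb the offset-level trajectory your induction tracks.
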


\begin{proof}
Thanks to the property of symmetry between the shared heaps, data is put into
the symmetric heaps at the same \emph{local} address across processes. For
instance, if a variable is located in the shared heap of a process at address
$X$, the same variable (possibly with a different value) will be located at the
same address $X$ in the symmetric heap of all the other processes.

That property is enforced by the fact that, as stated in the OpenSHMEM standard,
memory allocations which are performed in the symmetric heaps end by a call to a
global synchronization barrier. As a consequence, all the processing elements
\emph{must} allocate space. If they do not allocate the same space, then this is a
mistake made by the programmer and the behavior of the program is undefined
(paragraph 6.4 of the OpenSHMEM standard).
\end{proof}

\begin{corollary}
As a consequence of fact \ref{lemma:memorysymmetry}, each processing element
can compute the address of a variable located in another processing element's
symmetric heap by using the following formula:
\begin{equation}
  addr_{remote} = heap_{remote} + ( addr_{local} - heap_{local} )
\end{equation}
Where $addr_{remote}$ and $addr_{local}$ respectively denote the address of
symmetric objects that are located in the symmetric heap of a remote processing
element and in the local symmetric heap, and $heap_{remote}$ and $heap_{local}$
respectively denote the address at which the other processing element's heap and the
local heap are mapped in the local memory.
\end{corollary}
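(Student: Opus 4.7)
The plan is to derive the formula as a direct algebraic rearrangement of Fact~\ref{lemma:memorysymmetry}. First I would introduce the notion of the \emph{offset} of a symmetric object within its symmetric heap, namely the quantity $\delta_{local} = addr_{local} - heap_{local}$ on the local processing element, and analogously $\delta_{remote} = addr_{remote} - heap_{remote}$ on the remote processing element. These are well-defined because $addr_{local}$ and $addr_{remote}$ both point into their respective symmetric heaps by hypothesis.

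Next I would invoke Fact~\ref{lemma:memorysymmetry} directly: since the symmetric object in question is the same symbolic object on both processing elements, and since all processing elements run on the same architecture, the fact gives us $\delta_{local} = \delta_{remote}$. At this point the proof is essentially done: substituting the definition of $\delta_{remote}$ and solving for $addr_{remote}$ yields
\begin{equation*}
addr_{remote} = heap_{remote} + \delta_{local} = heap_{remote} + (addr_{local} - heap_{local}),
\end{equation*}
which is exactly the formula in the corollary statement.

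I do not anticipate a genuine obstacle here, since the corollary is essentially a restatement of Fact~\ref{lemma:memorysymmetry} in terms of absolute addresses rather than offsets. The one subtlety worth mentioning explicitly is that the formula only makes sense when the local processing element actually knows $heap_{remote}$, i.e.\ has mapped the remote heap into its own address space; under the shared-memory implementation described in Section~\ref{sec:implem:commengine} (where each \texttt{managed\_shared\_memory} segment is attached by every process), this is guaranteed, so no additional assumption needs to be introduced. The proof can therefore be kept to a short paragraph consisting of the offset definition, the appeal to Fact~\ref{lemma:memorysymmetry}, and the one-line rearrangement above.
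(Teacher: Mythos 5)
Your proof is correct and matches the paper's intent: the paper states this corollary without a separate proof, treating it as an immediate algebraic consequence of Fact~\ref{lemma:memorysymmetry}, which is exactly your offset-equality-plus-rearrangement argument. Your remark that $heap_{remote}$ must be the address at which the remote heap is mapped locally is already built into the corollary's statement, so no extra hypothesis is needed.
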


When a given process wants to access another process's symmetric heap, it
proceeds as follows:

\begin{itemize}[noitemsep]
\item Build the remote symmetric heap's name, based on its rank;
\item Make sure the remote symmetric heap exists. If it does not exist yet, we
  wait a little bit and try again;
\item Build a local object that contains a reference to the shared memory and
  maps it into the local memory;
\item Then find out what the address of the remote object is in the local
  memory. We use a little trick here. Boost provides the notion of \emph{handle}
  to locate a chunk of memory in a shared memory segment. This handle is
  relative to a shared memory segment. So we get the handle to the chunk of data
  which is in \emph{our own} symmetric heap, and then we get the address that
  corresponds to this handle in the \emph{remote} symmetric heap. This trick is
  allowed by the property of symmetry between all the symmetric heaps, which is
  explained in section \ref{sec:implem:memmanagement}. 
\end{itemize}

That being done, we have direct access to the remote symmetric heap. In
particular, we can copy data from and to this remote symmetric
heap.

However, building the remote heap's name and the corresponding shared object is
quite expensive in terms of object crations (and destructions at the end of this
process). As a consequence, they are all created at startup-time and cached in
a local structure (a table). This operation in itself is quite inexpensive
thanks to move mechanisms that are featured by C++11. Hence, when a process
needs to access another process's heap, it simply looks into this local table
for the reference to this segment of shared memory and accesses it.

\subsection{Symmetric static data}
\label{sec:implem:static}

The memory model specifies that global, static variables are made accessible for
other processes. In practice, these variables are placed in the BSS segment if
they are not initialized at compile-time and in the data segment if they are
initialized. Unfortunately, there is no simple way to make these areas of memory
accessible for other processes.

Therefore, POSH uses a small trick: we put them into the symmetric heap at the
very beginning of the execution of the program, before anything else is done. 

A specific program, called the \emph{pre-parser}, parses the source code and
searches for global variables that are declared as static. It finds out how they
must be allocated (size, etc) and generates the appropriate
allocation/deallocation code lines. 

When the OpenSHMEM library is initialized (\ie when the \texttt{start\_pes}
routine is called), it dumps the allocation code into the source code. When the
program exits (\ie when the keyword \texttt{return} is found in the main
function), the deallocation code lines are inserted before each \texttt{return}
keyword. 

\subsection{Datatype-specific routines}
\label{sec:implem:templates}

OpenSHMEM defines a function for each data type. For example, fetching a single
variable can be done by:
\begin{itemize}[noitemsep]
\item \texttt{short shmem\_short\_g( short *addr, int pe )} for variables of type \texttt{short}
\item \texttt{int shmem\_int\_g( int *addr, int pe )} for variables of type \texttt{int}
\item \texttt{long  shmem\_long\_g( long *addr, int pe )} for variables of type \texttt{long}
\item \texttt{float shmem\_float\_g( float *addr, int pe )} for variables of type \texttt{float}
\item \texttt{double shmem\_double\_g( double *addr, int pe )} for variables of type \texttt{double}
\item \texttt{long long shmem\_longlong\_g( long long *addr, int pe )} for variables of
  type \texttt{long long}
\item \texttt{long double shmem\_longdouble\_g( long double *addr, int pe )} for variables
  of type \texttt{long double}
\end{itemize}

A large part of this code can be factorized by using an extremely powerful
feature of the C++ language: templates. The corresponding code is written only
once, and then the template engine instanciates one function for each data
type. Hence, only one function needs to be written.

In the aforementioned example, only one function was written:

\begin{center}
  \begin{minipage}{.92\linewidth}
    \begin{lstlisting}
template<class T> T shmem_template_g( T* addr, int pe );
    \end{lstlisting}
  \end{minipage}
\end{center}

That function is called by each of the OpenSHMEM \texttt{shmem\_*\_...}
functions. Each call is actually a call to the compiler-generated function that
uses the adequate data type. That function is generated at compile-time, not at
run-time: consequently, calling that function is just as fast as if it had been
written manually. 

\subsection{Peer-to-peer communications}
\label{sec:implem:p2p}

Peer-to-peer communications are using memory copies between local and
shared buffers. As a consequence, memory copy is a highly critical
matter of POSH. Several implementations of \texttt{memcpy} are featured by POSH
in order to make use of low-level hardware capabilities such as MMX, MMX2, SSE
or SSE2 instruction sets, or the default \texttt{memcpy} provided by the
kernel. 

One of these implementations is activated by using a compiler directive. In
order to minimize the number of conditional branches, selecting one
particular implementation is made at compile-time rather than at run-time.

A comparison of the performance obtained by these different implementations is
presented in section \ref{sec:perf}.

\subsection{Collective communications}
\label{sec:implem:collective}

Collective communications rely on point-to-point communications that
perform the actual inter-process data movements. Two options are
available for these point-to-point communications:

\begin{itemize}[noitemsep]
\item \emph{Put-based} communications push the data into the
  next processes; 

\item \emph{Get-based} communications pull the data from other
  processes.
\end{itemize}

\subsubsection{Data structure for collective communication}

Each process holds a data structure in their shared heap (hence, other processes
can access it). This data structure contains information about the ongoing
collective operation: 

\begin{itemize}[noitemsep]
\item A pointer to the buffer that contains the data that is moved by
  the collective operation;
\item A counter, that counts how many remote processes have accessed
  the local data;
\item A type, that keeps what collective operation is underway;
\item A boolean that specifies whether or not the collective
  communication is already in progress;
\item In debug and in safe mode we can keep the size of the data
  buffer, in order to check that the allocated buffer has the same
  size as the data we are trying to put into it.
\end{itemize}

This data structure is intialized during the initalization of the OpenSHMEM
library, after the initalization of the symmetric heap. It is reset at the end
of each collective communication, in order to make sure the place is "clean" for the
next collective communication.

\subsubsection{Progress of a collective operation}

The communication model used by OpenSHMEM and its point-to-point
communication engine is particular in a sense that it is using
\emph{one-sided} operations. As a consequence, a process can access in
read or write mode another process's memory without the knowledge of
the latter process. One consequence of this fact is that a process can
be involved in a collective communication without having actually
entered the call to the corresponding routine yet. 

Hence, if a process $A$ must access the symmetric heap of a process $B$,
the former process must check whether or not the latter has entered
the collective communication yet. A boolean variable is included in the
collective data structure for this purpose.

If the remote process has not entered the collective communication
yet, its collective data structure must be initialized
remotely. 

If some data must be put into a remote process that has yet to
initialize its collective data structure, we only copy the
pointer to the shared source buffer. The actual memory allocation will
be made later. However, only temporary memory allocations are made within
collective operations. Buffers that are used as parameters of a collective
operation (source, target and work arrays) must be allocated before the call to
this operation. Since memory allocations end by a global barrier, no processing
element can enter a collective operation if not all of them have finished
their symmetric memory allocations.

When a process enters a collective operation, it checks whether the
operation is already underway, \ie whether its collective
data structure has already been modified by a remote process. If so,
we need to make the actual memory allocation for the local data and
copy what has already been put somewhere in another shared memory
area. 

A process exits the collective communication as soon as its
participation to the communication is over. Hence, no other process
will access its collective data structure. It can
therefore be reset.

\subsubsection{Temporary allocations in the shared heap}
\label{sec:implem:collective:temp}

With some collective communication algorithms, it can be necessary to allocate
some temporary space in the shared heap of a given processing element. However,
if we allocate some memory in one heap only, we break the important symmetry
assumption made in section \ref{sec:implem:memmanagement}. Nevertheless, we will see
here that actually, they have no impact in the symmetry of the shared heaps
outside of the affected collective operation.

\begin{lemma}\label{lemma:memorysymmetrycollective}
Non-symmetric, temporary memory allocations in the heap of a subset of the
processing elements that are performed during collective operations do not break
the symmetry of the heaps outside of the concerned collective operation.
\end{lemma}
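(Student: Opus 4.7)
The plan is to reduce the claim to Fact~\ref{lemma:memorysymmetry} by showing that, although a temporary allocation transiently breaks heap symmetry, the heap state of every processing element is restored to a symmetric configuration by the time the collective returns. Then any subsequent \texttt{shmalloc}/\texttt{shmemalign}/\texttt{shfree} call — which, by definition, is issued on all processing elements and ends with a global barrier — will satisfy the preconditions of Fact~\ref{lemma:memorysymmetry} as if the temporary allocations had never taken place.

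First I would make precise what ``temporary'' means here: a temporary allocation in a collective is one for which (i) the scratch buffer is created strictly inside the collective routine, (ii) it is freed before the collective returns, and (iii) it is not exposed to user code. These three properties are exactly the ones already used informally in Section~\ref{sec:implem:collective}. With that definition in hand, the argument proceeds in three steps. Step one: observe that between two consecutive symmetric allocations issued by user code, the only operations that touch the heap of a given processing element are the temporary allocations and their matching deallocations performed inside the collective. Step two: since each temporary allocation is paired with a deallocation before the collective returns, the set of live chunks in the heap at the exit of the collective is identical on every processing element to the set that was live at the entry of the collective. Step three: under the assumption that the underlying \texttt{managed\_shared\_memory} allocator is deterministic — i.e.\ that the address returned by \texttt{allocate} depends only on the sequence of past \texttt{allocate}/\texttt{deallocate} calls and their arguments — the allocator's internal state on exit of the collective is the same as on entry, modulo bookkeeping that does not influence the placement of future allocations. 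Consequently the next user-level symmetric allocation is placed at the same local offset on every processing element, and Fact~\ref{lemma:memorysymmetry} applies verbatim outside the collective.

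The main obstacle is step three: strictly speaking, a real allocator may fragment, coalesce, or otherwise keep metadata that makes the post-deallocation state differ from the pre-allocation state even though no chunk is live. I would therefore either (a) add an explicit hypothesis that temporary buffers are allocated and freed in LIFO order from a dedicated region of the symmetric heap, so that the allocator's free list is provably restored, or (b) invoke the OpenSHMEM specification's stance that the user-visible symmetry property only concerns addresses returned by symmetric allocation routines, so that it suffices to argue at the granularity of those return values rather than of the allocator's internal state. Either route reduces the claim to Fact~\ref{lemma:memorysymmetry}; option (a) is cleaner and I would prefer to state it as an implementation invariant maintained by POSH's collective routines.
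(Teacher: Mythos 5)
Your proposal is correct and follows the same overall strategy as the paper: argue that the temporary allocations are all undone before each processing element leaves the collective, that symmetric (\texttt{shmalloc}-style) allocations cannot interleave with the collective because they end with a global barrier, and hence that Fact~\ref{lemma:memorysymmetry} continues to hold outside the collective call. The differences are ones of emphasis. The paper spends most of its proof on an explicit case analysis of the relative states of an arbitrary pair of processes $A$, $B$ --- both outside, both inside, $A$ inside while $B$ has not yet entered, and $A$ inside while $B$ has already left --- precisely because the one-sided model lets $A$ allocate into $B$'s heap before $B$ has entered the routine; the barrier at the end of \texttt{shmalloc} is what guarantees $B$ cannot issue a symmetric allocation in that window, and the cleanup-before-exit invariant handles the rest. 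Your step one asserts this non-interleaving more casually, so if you keep your structure you should justify it the way the paper does (no PE can enter the collective before every PE has completed its last pre-collective \texttt{shmalloc}, and after a PE leaves, no other PE touches its heap). Conversely, your step three raises a point the paper silently glosses over: the paper equates ``same set of live chunks at entry and exit'' with ``heap state unchanged,'' without worrying whether the allocator's internal state (fragmentation, free-list order) is restored so that subsequent symmetric allocations land at identical offsets on every PE. Your proposed remedies --- a LIFO/dedicated-region invariant for temporaries, or arguing only at the granularity of addresses returned by symmetric allocation calls --- are a genuine strengthening, and option (a) stated as an implementation invariant of POSH's collectives would make the lemma's proof tighter than the one in the paper.
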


\begin{proof}
Semantically, collective operations are symmetric, in a sense that all the
concerned processing elements must take part of them. As a consequence, if all
the heaps are symmetric before they enter the collective operation and if there
is no difference between the state of each heap at the beginning and at the end
of the collective operation, hence, the symmetry is not broken. 

Buffers (source, target and work arrays) must be allocated by a call to
\texttt{shmalloc} before calling the collective operation. Since
\texttt{shmalloc} ends by a call to a global barrier, no processing element can
enter the collective operation before all the processing elements have performed
the last memory allocation before this collective call.

According to the sematics of collective operations, at any given moment of the
progress of a collective operation, the relative state of each pair of processes
$A$ and $B$ is one of the following:
\begin{itemize}[noitemsep]
\item Neither $A$ nor $B$ have entered the collective operation yet, or both of
  them have left it: in this case, $A$ and $B$ do not interact with
  each other's symmetric heaps, and memory allocations are performed
  by symmetric allocations, as described in section
  \ref{sec:implem:memmanagement}.

\item Both processes $A$ and $B$ are inside the call to the collective
  routine and can interfere with each other's symmetric heap: anything
  can happen in the symmetric heaps within the progress of the
  collective operation, as long as non-symmetric operations are
  cleaned-up when each process leaves the collective operation. 

\item A process $A$ can push some data into the memory of another process $B$
  that has not entered the collective operation yet. In this case, $B$ is
  unknowingly taking part of the collective operation, but
  its shared heap can potentially be modified by other processes. However, as
  stated above, $B$ cannot perform any symmetric memory allocation between that
  moment and the moment when it enters the collective operation. As a consequence,
  the non-symmetric allocations that can potentially be made in $B$'s shared heap
  by other processing elements while the latter are taking part of the collective
  operation do not break the symmetry as long as they are reverted (\ie freed)
  before $B$ leaves the collective operation.

\item A process $A$ still part of the collective operation whereas another
  process $B$ is done with its participation to the operation and therefore, $B$
  has left the call to the collective routine, which means that its
  participation to the collective operation is over. As a consequence,
  other processes have no reason to modify its shared heap. 
\end{itemize}

\end{proof}

\subsubsection{Switching between algorithms}

In order to reduce the number of conditional branches, collective
communication algorithms are chosen at compile-time. The choice is
made using compiler variables and conditions. A default choice is
provided if no option is passed to the compiler, and a warning is
displayed. 

\subsubsection{Run-time error checking}

Collective communications are an important source of bugs and errors
in parallel programs. When compiled in safe mode, the OpenSHMEM
library provides some run-time error checking facilities. 

For instance, it can check whether the size of the available buffer is
equal to the size of the data that is about to be pushed into or
pulled from a shared collective data structure. It can also make sure
that the collective data structures of the local and the remote
processes are performing the same type of collective operation. 

\subsection{Locks and atomic operations}

Boost provides \emph{named mutexes} for locking purpose. These locks are
interesting, because they can be specific for a given symmetric heap. Each
process uses the same given name for a given chunk of data on a given symmetric
heap. Using a mutex that locally has the same name as all the other local
mutexes, processes ensure mutual exclusion. Hence, we can make sure that a chunk
of data is accessed by one process only. 

Boost also provides a function that executes a given function object
\emph{atomically} on a managed shared memory segment such as the one that is
used here to implement the symmetric heap. Hence, we can perform atomic
operations on a (potentially remote) symmetric heap. 

\subsection{Run-time environment}
\label{sec:implem:rte}

As with any parallel program, the run-time environment of OpenSHMEM is here to:
\begin{itemize}[noitemsep]
\item Spawn the parallel processes;
\item Make sure they know how to communicate with each other;
\item Monitor them, and take the appropriate actions if one of them dies;
\item Terminate the execution when necessary;
\item Forward the IOs and signals through the gateway process that provides the
  user with an access to the parallel execution.
\end{itemize}

\paragraph{Process spawning}

Processes are spawned individually by separate threads. At first, a pool of
threads is created: the \texttt{workers} thread group. Then each thread forks a
process: the corresponding OpenSHMEM processing element. The master thread then
yields its slice of time (\texttt{sched\_yield}) and waits on a
condition. Eventually, the threads are joined. 

\paragraph{Contact information}

Processes communicate with each other using shared memory
segments, which are their symmetric shared heap. The name of this
shared memory segment is built using a constant basis and the rank of
the target process. Hence, processes can communicate with each other
as soon as they know their rank. 

\paragraph{Inputs and outputs}

The run-time environment is supposed to forward IOs and signals between the user
and the parallel application. More specifically, the parallel application is
made of several processes, whereas the user is in contact with only one process:
the master process, which is used as a gateway between the user and the
application. 

For instance, if a parallel process performs an output (\texttt{printf},
\texttt{std<<cout}...), the result of this output will be displayed to the user
by the gateway process. Similarly, if the user sends a signal to the gateway
process (\eg \texttt{SIGKILL}), this signal is sent to all the processes of the
parallel application.

The mechanism used here to create the parallel processes preserves IOs. The
parallel processes are offsprings of the gateway processes: hence, their IOs are
forwarded by default.

If necessary, \texttt{stdout}, \texttt{stderr} and \texttt{stdin} can be
duplicated and copied just before the \texttt{execve} system call.

\paragraph{Run-time debugging}

Parallel processes can require to be debugged in an interactive way at
run-time. In this case, a sequential debugger like \texttt{gdb} can be attached
to a given process. 

To allow this attachment, the parallel process is stuck in an infinite loop at
the begining of its initialization. 

Debugging information and checks have to be placed in parts of code that are
removed by the compiler when the debugging mode is disabled. The compiler
variable is called \texttt{\_DEBUG}. 

POSH also provides a \emph{safe} mode. This mode enables some debugging and error
checking information for the parallel program, whereas the debug mode enables
debugging information for the OpenSHMEM library. The compiler
variable is called \texttt{\_SAFE}. 

For instance, the safe mode checks that when a process wants to run a collective
communication, it is not already participating to another collective
communication. 

In order to be able to choose whether to use it or not without affecting the
performance, we chose not to make it a run-time option, but a compile-time
option. As a consequence, it must be enabled by a compiler option when the
OpenSHMEM library is compiled. 

Code related to the safe mode is left out by the compiler when it is not meant
to be enabled. We are using a compiler variable called \texttt{\_SAFE}. 
\section{Performance and experimental results}
\label{sec:perf}

\begin{center}
  \begin{table*}[ht]
    \caption{Comparison of the performance observed with various memcpy implementations\label{tab:perf:memcpy}}
     \small
     \centering

     \begin{subtable}{.48\linewidth}
       \centering
       \begin{tabular}{l|c|c|c|c|}
         ~ & \multicolumn{4}{|c|}{Memory copy latency (ns)} \\
         ~ & memcpy & MMX & MMX2 & SSE \\
         \hline
         \hline
         Caire & 38.85 & 41.10 & 38.65 & {\bf 38.05} \\ 
         \hline
         Jaune &  1277.90  &   1273.90   &  {\bf  1269.90 }  &  1279.90 \\
         \hline
         Magi10 & 45.40   &  {\bf 38.20}   &    39.90   &    40.70  \\
         \hline
         Maximum &  21.70   &  {\bf  20.25}   &    20.45   &    21.00   \\
         \hline
         Pastel & {\bf 1997.30}   & 1997.40 & 2011.35 & 1997.35 \\
       \end{tabular}
     \end{subtable} \hfill %
     \begin{subtable}{.48\linewidth}
       \centering
       \begin{tabular}{l|c|c|c|c|}
         ~ & \multicolumn{4}{|c|}{Memory copy bandwidth (Gb/s)} \\
         ~ & memcpy  & MMX & MMX2 & SSE \\
         \hline
         \hline
         Caire & {\bf 18.40}  &  12.25  &   18.18  &   18.37 \\
         \hline
         Jaune & 9.84   &   10.03  &   16.44 &   {\bf 16.60} \\
         \hline
         Magi10 & {\bf 22.93}  &  21.13  &   17.06 &  20.77 \\
         \hline
         Maximum &  67.47  &   47.52  &   76.59  & {\bf 77.91 }\\
         \hline
         Pastel & 20.27 & 9.12 & {\bf 20.32} & 19.82\\
       \end{tabular} 
     \end{subtable} 
     
  \end{table*} 
\end{center}

This section presents some evaluations of the performance achieved by POSH. Time
measurements were done using {\tt clock\_gettime()} on the {\tt 
  CLOCK\_REALTIME} to achieve nanosecond precision. All the programs were
compiled using {\tt -Ofast} if available, {\tt -O3} otherwise. Each experiment
was repeated 20 times after a warm-up round. We measured the time
taken by data movements (put and get operations) for various buffer
sizes. 

\subsection{Memory copy}
\label{sec:perf:memcpy}

Since memory copy (\texttt{memcpy}-like) is a highly critical function of POSH,
we have implemented several versions of this routine and evaluated them in a separate
micro-benchmark. The compared performance of these various implementations of
{\tt memcpy()} is out of the scope of this paper. A good description of this
challenge, the challenges that are faced and the behavior of the various
possibilities can be found in \cite{memcpySmith}. 

The goal of POSH is to achieve high-performance while being portable. As a
consequence, the choice has been made to provide different implementations of
\texttt{memcpy} and let the user choose one at compile-time, while providing a
default one that achieves reasonably good performance across platforms. 

We have compared several implementations of {\tt memcpy} on various platforms that
feature different CPUs: an Intel Core i7-2600 CPU running at 3.40GHz (Maximum),
a Pentium Dual-Core CPU E5300 running at 2.60GHz (Caire), an AMD Athlon 64 X2
Dual Core Processor 5200+ (Jaune), a large NUMA node featuring 4 CPUs with 10
physical cores each (20 logical cores with hyperthreading), Intel Xeon CPU
E7-4850 running at 2.00GHz (Magi10) and a NUMA node featuring 2 CPUs with 2
cores each, Dual-Core AMD Opteron Processor 2218 running at 2.60GHz
(Pastel). All the platforms are running Linux 3.2, except Jaune (2.6.32) and
Maximum (3.9). The code was compiled by gcc 4.8.2 on Maximum, Caire and Jaune,
gcc 4.7.2 on Pastel and icc 13.1.2 on Magi10. 

We compared the stock {\tt memcpy()} and MMX-, MMX2- and SSE-based
implementations. The performance (latency and bandwidth) of these
implementations on the aforementioned platforms are summarized in
table~\ref{tab:perf:memcpy}. 

We can see that the variations between latencies obtained by all the four
implementations are very small, except for Magi10 (the large NUMA node). Pastel
(the Opteron node) features a slightly better bandwidth with MM2, whereas Jaune
and Maximum (the Athlon XP and the Core i7 nodes) achieve higher bandwidth with
SSE and Caire and Magi10 (the Dual-Core and the large NUMA node) perform better
with the stock {\tt memcpy}. The large performance gap on Jaune may be explained by
the relatively old software it is running. Overall, the stock {\tt memcpy}
performs quite well (best performance of close to the best performance), except
on Jaune and Maximum, on which the bandwidth is largely improved by using SSE
instructions.  

\subsection{POSH communication performance}

\begin{center}
  \begin{figure*}[htb]
    \centering
    \includegraphics[width=.98\linewidth]{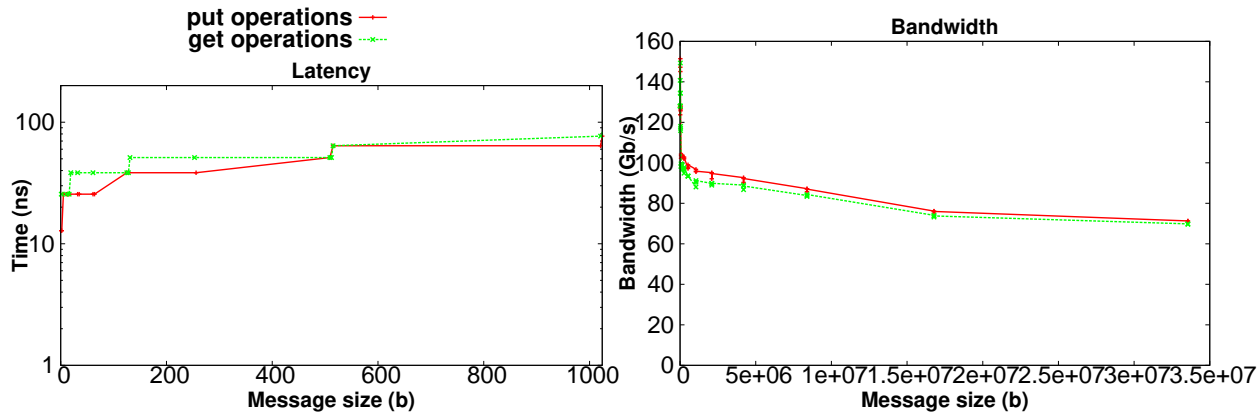}
    \caption{Communication performance of Paris OpenSHMEM on Maximum.\label{fig:perf:posh:maxi}} 
  \end{figure*}
   \begin{table*}[htb]
     \caption{Comparison of the performance observed by put and get operations with POSH\label{tab:perf:posh}}
     \small
     \centering
     \begin{subtable}{.48\linewidth}
       \centering
       \begin{tabular}{l|c|c|c|c|}
         ~ & \multicolumn{4}{|c|}{SHMEM latency (ns)} \\
         ~ & \multicolumn{2}{|c|}{Best copy} & \multicolumn{2}{|c|}{\tt memcpy} \\
         ~ & get & put  & get & put \\
         \hline
         \hline
         Caire & 38.40 & 38.40 & 38.40 & 38.40 \\
         \hline 
         Jaune & 1741.85 & 1665.90 & 1667.90 & 1663.90\\
         \hline
         Magi10  & 38.40 & 38.40 & 38.40 & 38.40 \\
         \hline
         Maximum & 38.40 & 38.40 & 38.40 & 38.40 \\
         \hline
         Pastel & 1830.40 & 1689.60 & 1830.40 & 1689.60 \\
       \end{tabular}
     \end{subtable} %
     \begin{subtable}{.48\linewidth}
       \centering
       \begin{tabular}{l|c|c|c|c|}
         ~ & \multicolumn{4}{|c|}{SHMEM bandwidth (Gb/s)} \\
         ~ & \multicolumn{2}{|c|}{Best copy} & \multicolumn{2}{|c|}{\tt memcpy} \\
         ~ &  get &  put  &  get &  put \\
         \hline
         \hline
         Caire & 18.36 & 18.38 & 18.36 & 18.38 \\
         \hline 
         Jaune & 17.62 & 17.55 & 10.52 & 10.59 \\
         \hline
         Magi10  & 20.46 & 20.16 & 20.46 & 20.16 \\
         \hline
         Maximum & 74.09 & 76.15 & 68.51& 69.28\\
         \hline
         Pastel & 26.07 & 25.50 & 26.07 & 25.50 \\
       \end{tabular}
     \end{subtable} 
   \end{table*} 
 \end{center}

We evaluated the communication performance obtained with POSH. On Caire,
Magi10 and Pastel, we used the stock {\tt memcpy} for data movements. On Jaune
and Maximum, we used both the SSE-based implementation and the sock
{\tt memcpy}. Table \ref{tab:perf:posh} present the latency and
bandwidth obtained by put and get operations with POSH. Figure
\ref{fig:perf:posh:maxi} plots the latency and bandwidth obtained on Maximum.

On the "fast" machines (Caire, Magi10 and Maximum), the latency is too
small to be measured precisely by our measurement method. We can
see on table \ref{tab:perf:posh} that the latency has the same order of magnitude as the
one obtained by a {\tt memcpy} within the memory space of a single
process. However, measuring the latency on regular communication
patterns gives an indication on the overall latency of the
communication engine, but may be different from what would be obtained
on more irregular patterns, where the segment of shared memory is not
in the same cache memory as the process that performs the data
movement. In the latter case, the kernel's scheduling performance is
highly critical.

Similarly, the bandwidth obtained by POSH has little overhead compared
with the one obtained by a {\tt memcpy} within the memory space of a single
process. We can conclude here that our peer-to-peer communication
engine adds little overhead, no to say a negligible one, and
inter-process communications are almost as fast as local memory copy
operations. 

\subsection{Comparison with another communication library}

\begin{center}
  \begin{table*}[ht]
    \caption{Comparison of the performance observed by put and get operations
      with UPC\label{tab:perf:upc}}
     \small
     \centering
     
     \begin{subtable}{.48\linewidth}
       \centering
       \begin{tabular}{l|c|c|}
         ~ & \multicolumn{2}{|c|}{UPC latency (ns)} \\
         ~ &  get & put \\
         \hline
         \hline
         Caire & 39.40 & 37.55 \\
         \hline
         Jaune & 1623.90 & 1623.90 \\
         \hline
         Magi10 & 73.80   & 54.90 \\
         \hline
         Maximum & 26.75   &  25.00 \\
         \hline
         Pastel & 2025.10 &  1689.95\\
     \end{tabular}\end{subtable} %
     \begin{subtable}{.48\linewidth}
       \centering
       \begin{tabular}{l|c|c|}
         ~ & \multicolumn{2}{|c|}{UPC bandwidth (Gb/s)} \\
         ~ &  get & put \\
         \hline
         \hline
         Caire & 18.03 & 18.45 \\
         \hline
         Jaune & 9.95 & 10.63 \\
         \hline
         Magi10 & 18.64 & 16.33 \\
         \hline
         Maximum & 67.45 & 68.86 \\
         \hline
         Pastel & 23.52 & 25.06 \\
       \end{tabular}
     \end{subtable} 
  \end{table*} 
\end{center}

We used a similar benchmark to evaluate the communication performance
of Berkeley UPC, whose communication engine, GASNet \cite{GASNET},
uses {\tt memcpy} to move data. As a consequence, the results obtained
here must be compared to those obtained in the previous sections with
the stock {\tt memcpy}. Here again, we can see that BUPC inter-process
data movement operations have little overhead compared to a memory
copy that would be performed within the memory space of a single
process. The results are presented in table \ref{tab:perf:upc}.

We can see here that both POSH and another one-sided communication library (Berkeley
UPC) have performance that are close to a memory copy within the address space
of a single process. Besides, we have seen how the performance can benefit from
a tuned memory copy routine. 
\section{Conclusion and perspective}
\label{sec:conclu}

In this paper, we have presented the design and implementation of POSH, an
OpenSHMEM implementation based on a shared memory engine provided by
Boost.Interprocess, which is itself based on the POSIX {\tt shm} API. We have
presented its architecture, a model for its communications and proved some
properties that some implementation choices rely on. We have presented an
evaluation of its performance and compared it with a state-of-the-art
implementation of UPC, another programming API that follows the same
communication model (one-sided communications).

We have seen that POSH achieves a performance which is comparable with this
other library and with simple memory copies. We have also shown how it can be
tuned in order to benefit from optimized low-level routines. 

That communication model opens perspectives on novel work on distributed
algorithms. The architectural choices that were made in POSH make it possible to
use it as a platform for implementing and evaluating them in practice. For
instance, locks, atomic operations and collective operations are classical
problems in distributed algorithms. They can be reviewed and
re-examined in that model in order to create novel algorithms with an
original approach.

\bibliographystyle{plain}
\bibliography{iccs}

\end{document}